\documentclass{amsart}

\usepackage{amssymb,amsfonts}
\usepackage[all,arc]{xy}
\usepackage{enumerate}
\usepackage{mathrsfs}
\usepackage{tikz}
\usepackage{algorithm}
\usepackage{algorithmicx}
\usepackage{algpseudocode}
\usepackage{float}
\usepackage{hyperref}
\usepackage{wrapfig}

\usetikzlibrary{matrix,arrows}

\newtheorem{thm}{Theorem}[section]

\newtheorem{prop}[thm]{Proposition}

\theoremstyle{definition}
\newtheorem{defn}[thm]{Definition}

\theoremstyle{remark}

\newcommand\CONDITION[2]%
  {\begin{tabular}[t]{@{}l@{}l@{}}
     #1&#2
   \end{tabular}%
  }
\algdef{SE}[WHILE]{While}{EndWhile}[1]%
  {\algorithmicwhile\ \CONDITION{#1}{\ \algorithmicdo}}%
  {\algorithmicend\ \algorithmicwhile}
\algdef{SE}[FOR]{For}{EndFor}[1]%
  {\algorithmicfor\ \CONDITION{#1}{\ \algorithmicdo}}%
  {\algorithmicend\ \algorithmicfor}
\algdef{S}[FOR]{ForAll}[1]%
  {\algorithmicforall\ \CONDITION{#1}{\ \algorithmicdo}}
\algdef{SE}[REPEAT]{Repeat}{Until}{\algorithmicrepeat}[1]%
  {\algorithmicuntil\ \CONDITION{#1}{}}
\algdef{SE}[IF]{If}{EndIf}[1]%
  {\algorithmicif\ \CONDITION{#1}{\ \algorithmicthen}}%
  {\algorithmicend\ \algorithmicif}%
\algdef{C}[IF]{IF}{ElsIf}[1]%
  {\algorithmicelse\ \algorithmicif\ \CONDITION{#1}{\ \algorithmicthen}}
  
  \makeatletter

\makeatother

\begin{document}

\title[Sums-of-Squares Constraint Propagation]{A Constraint Propagation Algorithm for Sums-of-Squares Formulas over the Integers}
\author{Melissa Lynn}

\begin{abstract}
Sums-of-squares formulas over the integers have been studied extensively using their equivalence to consistently signed intercalate matrices. This representation, combined with combinatorial arguments, has been used to produce sums-of-squares formulas and to show that formulas of certain types cannot exist. In this paper, we introduce an algorithm that produces consistently signed intercalate matrices, or proves their nonexistence, extending previous methods beyond what is computationally feasible by hand.
\end{abstract}

\maketitle

\section{Introduction}

A \emph{sums-of-squares formula of type $[r,s,n]$ over $\mathbb{Z}$} is an identity of the form
\[
(x_1^2+\cdots + x_r^2)(y_1^2+\cdots +y_s^2)=z_1^2+\cdots +z_n^2
\]
where each $z_i$ is a bilinear expression in the $x$'s and $y$'s over $\mathbb{Z}$.

Sums-of-squares formulas have been studied since 1898, when Hurwitz proved that the only real normed division algebras are the real numbers, the complex numbers, the quaternions, and the octonions. This theorem is proved by considering sums-of-squares formulas of type $[n,n,n]$ over $\mathbb{R}$. In his paper, Hurwitz posed the general question: for what types $(r,s,n)$ does a sums-of-squares formula exist over a given field $F$? \cite{Hurwitz1} \cite{Hurwitz2} We may also consider sums-of-squares formulas over rings.

Whether existence of a sums-of-squares formula depends on the base ring or field remains an open question. Sums-of-squares formulas over the integers are particularly important, since a sums-of-squares formula over $\mathbb{Z}$ maps to a formula over any field $F$ via the natural map $\mathbb{Z}\rightarrow F$.

The general question about the existence of sums-of-squares formulas has valuable connections to topology and geometry, since sums-of-squares formulas induce immersions of projective space into Euclidean space, and they induce Hopf maps between spheres. Shapiro's book \cite{ShapiroBook} covers the history of sums-of-squares formulas and past results.

The special case of a sums-of-squares formula over the integers has been studied extensively using combinatorics. Yuzvinsky \cite{Yuzvinsky} introduced an equivalence between sums-of-squares formulas over the integers and consistently signed intercalate matrices. By studying these matrices, Yuzhvinsky and others were able to produce many new formulas, and prove many new results on the existence of formulas of various types.

In this paper, we revisit the equivalence between sums-of-squares formulas over the integers and consistently signed intercalate matrices. We introduce a constrain propagation algorithm which produces a consistently signed intercalate matrix of a given type, or shows that such a matrix does not exist. Thus, we get the corresponding conclusion for sums-of-squares formulas over the integers. We discuss canonical forms for consistently signed intercalate matrices, which significantly improve the efficiency of the algorithm.

\section{Equivalence with Consistently Signed Intercalate Matrices}

In this section, we review the equivalence of sums-of-squares formulas over the integers with consistently signed intercalate matrices.

\begin{defn}
An \emph{intercalate matrix of type $(r,s,n)$} is an $r\times s$ matrix $M$ with entries in the set of $n$ elements $\{1,...,n\}$ such that:
\begin{itemize}
\item The entries along each row are distinct.
\item The entries along each column are distinct.
\item If $M_{ij}= M_{i'j'}$, then $M_{ij'} = M_{i'j}$. (Equivalently, each $2\times 2$ submatrix contains an even number of distinct elements.)
\end{itemize}
Such a matrix is called \emph{consistently signed} if we can assign a sign ($+$ or $-$) to each entry such that if $M_{ij}= M_{i'j'}$, and so $M_{ij'} = M_{i'j}$, then the $2\times 2$ submatrix consisting of these four elements has an odd number of minus signs.
\end{defn}

An example of a consistently signed intercalate matrix of type $(3,5,7)$ is 
\[
\left(\begin{array}{ccccc}
1 & 2 & 3 & 4 & 5\\
2 & -1 & 4 & -3 & 6\\
3 & -4 & -1 & 2 & 7
\end{array}\right).
\]

A consistently signed intercalate matrix $M$ of type $(r,s,n)$ is equivalent to a sums-of-squares formula of type $[r,s,n]$ through the following correspondence:

\begin{itemize}
\item[] If $(i,j)$th entry of $M$ is $\pm k$, then $x_iy_j$ occurs in the expansion of $z_k$ with matching sign.
\end{itemize}

For example, the sums-of-squares formula of type $[r,s,n]$ corresponding to the consistently signed intercalate matrix above is given by
\begin{align*}
z_1 &= x_1y_1 - x_2y_2-x_3y_3,\\
z_2 &= x_1y_2 + x_2y_1 + x_3y_4,\\
z_3 &= x_1y_3 - x_2y_4 + x_3y_1,\\
z_4 &= x_1y_4 +x_2y_3 - x_3y_2,\\
z_5 &= x_1y_5,\\
z_6 &= x_2y_5,\\
z_7 &= x_3y_5.
\end{align*}

\section{Constraint Propagation Algorithm} \label{example}

We now summarize the methods of the constraint propagation algorithm for producing a consistently signed intercalate matrix of a given type $[r,s,n]$ (or concluding that such a formula does not exist). A more detailed treatment of the key functions of the algorithm is included in \ref{detail}, with pseudocode.

The input for this procedure will be a partially completed matrix. For example, for a $[4,4,4]$ formula, we might start with the matrix
\[
\left(\begin{array}{cccc}
1&2&3&4\\
*&*&*&*\\
*&*&*&*\\
*&*&*&*\\
\end{array}\right),
\]
where the asterisks, $*$, indicate that an entry is not yet determined. For large types, choosing a good input is critical to the efficiency of the algorithm. This choice is discussed in \ref{input}.

The output of the procedure is a completed consistently signed intercalate matrix with the form of the input matrix, or the conclusion that such a matrix does not exist.

The procedure works by maintaining a list of possible values at each unknown entry, and updating these lists as entries are filled in. For example, for the input above, the lists would be initialized as
\[
\left(\begin{array}{cccc}
1&2&3&4\\
\{\pm 2, \pm 3, \pm 4\}&\{\pm 1, \pm 3, \pm 4\}&\{\pm 1, \pm 2, \pm 4\}&\{\pm 1, \pm 2, \pm 3\}\\
\{\pm 2, \pm 3, \pm 4\}&\{\pm 1, \pm 3, \pm 4\}&\{\pm 1, \pm 2, \pm 4\}&\{\pm 1, \pm 2, \pm 3\}\\
\{\pm 2, \pm 3, \pm 4\}&\{\pm 1, \pm 3, \pm 4\}&\{\pm 1, \pm 2, \pm 4\}&\{\pm 1, \pm 2, \pm 3\}\\
\end{array}\right),
\]
since the colors of entries in the same column must be distinct.

The next step is to choose a test value from the list of possibilities for one of the entries. The procedure makes a copy of the matrix including the chosen test entry, propagates the consequences of the choice, and continues to work with this matrix. This test case will eventually either result in a complete consistently signed intercalate matrix, or the conclusion that none exists. If it produces a matrix, that matrix is the output for the original procedure. If there is no such matrix, the choice of a test value is eliminated from the original matrix, and the procedure chooses a new possible value to test.

The choice of the test entry can significantly affect the efficiency of the algorithm, and when we eliminate a test entry, we can often reach additional conclusions and eliminate additional possibilities. These refinements are discussed in \ref{testvalues}.

In our example, the procedure might choose to try $2$ as a possibility for the entry $(2,1)$. The procedure makes a copy of the matrix, enters the chosen test value, and propagates the consequences of this test value. It would first eliminate the color $2$ from the remaining entries in the second row and first column, 
\[
\left(\begin{array}{cccc}
1&2&3&4\\
2&\{\pm 1, \pm 3, \pm 4\}&\{\pm 1, \pm 4\}&\{\pm 1, \pm 3\}\\
\{\pm 3, \pm 4\}&\{\pm 1, \pm 3, \pm 4\}&\{\pm 1, \pm 2, \pm 4\}&\{\pm 1, \pm 2, \pm 3\}\\
\{\pm 3, \pm 4\}&\{\pm 1, \pm 3, \pm 4\}&\{\pm 1, \pm 2, \pm 4\}&\{\pm 1, \pm 2, \pm 3\}\\
\end{array}\right).
\]
In order for the $2\times 2$ submatrix in the upper left corner to satisfy the condition on squares, we can also conclude that the $(2,2)$ entry must be $-1$. We then propagate the consequences of this assignment.
\[
\left(\begin{array}{cccc}
1&2&3&4\\
2&-1&\{\pm 4\}&\{\pm 3\}\\
\{\pm 3, \pm 4\}&\{\pm 3, \pm 4\}&\{\pm 1, \pm 2, \pm 4\}&\{\pm 1, \pm 2, \pm 3\}\\
\{\pm 3, \pm 4\}&\{\pm 3, \pm 4\}&\{\pm 1, \pm 2, \pm 4\}&\{\pm 1, \pm 2, \pm 3\}\\
\end{array}\right).
\]

Since we know that the entry $(2,3)$ must be the color $4$ (although we don't know the sign yet), we can eliminate $4$'s from the other entries in the column, and similarly for $3$'s in the last column.

\[
\left(\begin{array}{cccc}
1&2&3&4\\
2&-1&\{\pm 4\}&\{\pm 3\}\\
\{\pm 3, \pm 4\}&\{\pm 3, \pm 4\}&\{\pm 1, \pm 2\}&\{\pm 1, \pm 2\}\\
\{\pm 3, \pm 4\}&\{\pm 3, \pm 4\}&\{\pm 1, \pm 2\}&\{\pm 1, \pm 2\}\\
\end{array}\right).
\]

This process continues, making choices and propagating the consequences in rows, columns, and squares. For our example, we obtain the following matrix by making the choices that the $(2,3)$ entry is $4$, the $(3,1)$ entry is $3$, and the $(4,1)$ entry is $4$.
\[
\left(\begin{array}{cccc}
1&2&3&4\\
2&-1&4&-3\\
3&-4&-1&2\\
4&3&-2&-1\\
\end{array}\right).
\]

\section{More Details} \label{detail}

We now give a detailed description of the various functions used in the constraint propagation algorithm, finishing with high-level pseudo-code for the algorithm. The code itself is available at \url{www.math.umn.edu/~mklynn/sos_pub}.

We begin with the function \textsc{makeMatrix} that takes values for $r$, $s$, and $n$, and sets up an $r\times s$ matrix, where the entries are lists of all possible entries, $[-n,...,-2,-1,1,2,...,n]$. The object that we return includes this matrix, as well as various other structures that will be useful later in the algorithm. These structures include:
\begin{itemize}
\item A list of coordinates for the matrix
\item A list of the rows of the matrix
\item A list of the columns of the matrix
\item A list of the squares of the matrix
\item For each coordinate, a list of all other coordinates in the same row or same column.
\item For each coordinate, a list of all $2\times 2$ submatrices (squares) which include that coordinate.
\item An initially empty list of assignments to propagate
\item An initially empty list of known colors to propagate
\end{itemize}

This provides the framework that we'll use to produce a consistently signed intercalate matrix. As we assign entries, we'll propagate the consequences of this assignment, eliminating possibilities from related entries. By eliminating entries in this way, we greatly reduce the search space compared to a brute force search, or even a back-tracking approach as in \cite{Lynn}.

We now turn our attention to the function which assigns values to entries. This function could be called by a user, who wants to test for specific consistently signed intercalate matrices. It will be called once values become known as a result of constraint propagation, and it will be called to try strategically chosen test values once all constraints have been propagated. Calling this function as a result of constraint propagation will be discussed later in this section. The user's choice of assignment will be discussed in \ref{input}. Choosing good test values will be discussed in \ref{testvalues}.

The function \textsc{assign} takes our matrix as an input, along with a $value$ and a $coordinate$ to which it will be assigned. If the coordinate has already been assigned to be this value, \textsc{assign} immediately returns $True$. Then, if the value is not a possibility for that coordinate, \textsc{assign} returns $False$. Otherwise, \textsc{assign} makes the assignment, adds it to the list of assignments to propagate, and returns $True$.

\begin{algorithm}[H]
\caption{Given a matrix and a value to assign to a coordinate, make this assignment if it's possible.}
\begin{algorithmic}
\Function{assign}{$ourMatrix$, $coordinate$, $value$}
\If{$coordinate$ has already been assigned to be $value$}
\State\Return{True}
\EndIf
\If{$value$ is not a possibility for the entry at $coordinate$}
\State \Return{False}
\EndIf
\State{$ourMatrix[coordinate] \gets value$}
\State{add $(coordinate, value)$ to list of assignments to propagate}
\State \Return True
\EndFunction
\end{algorithmic}
\end{algorithm}

Before we introduce the propagation functions, we introduce the function \textsc{eliminate}, which will be used by the propagation functions. This function takes the our matrix with a $coordinate$ and $value$, and deletes that value as a possibility for that coordinate. If the elimination means that there is only one remaining possibility for that entry, that value is assigned. When the elimination function reduces the entry to a single color with either sign as an option, the color and entry are added to the running list of colors to propagate. The function returns $False$ if the elimination is impossible, and $True$ otherwise.

\begin{algorithm}[H]
\caption{Given a matrix and a value to eliminate from a coordinate, make this elimination if it's possible.}
\begin{algorithmic}
\Function{eliminate}{$ourMatrix$, $coordinate$, $value$}
\If{the value at coordinate is already known}
\If{the known value is different from $value$}
\State \Return{True}
\Else
\State \Return{False}
\EndIf
\EndIf
\If{$value$ is not in the list $ourMatrix[coordinate]$}
\State\Return{True}
\EndIf
\State{delete $value$ from the list $ourMatrix[coordinate]$}
\If{there is only one value, $lastValue$, in $ourMatrix[coordinate]$}
\State{\Return \Call{assign}{$ourMatrix$, $coordinate$, $lastValue$}}
\EndIf
\If{$ourMatrix[coordinate]$ consists of two values of the same color}
\State{add $(coordinate, color)$ to the list of colors to propagate}
\EndIf
\State \Return{True}
\EndFunction
\end{algorithmic}
\end{algorithm}

Note that the structure of the functions \textsc{assign} and \textsc{eliminate} ensure that we will never have duplicate entries in our lists of assignments and colors to propagate.

We now turn our attention to the propagation functions, beginning with the function \textsc{propagateRowsAndColumns}, which takes our matrix, a $coordinate$ and its $value$. The function propagates the consequences of this assignment to the row and column of $coordinate$, eliminating the color of $value$ from all of those entries. The function returns $False$ if one if these eliminations is impossible, and it returns $True$ otherwise.

\begin{algorithm}[H]
\caption{Given a matrix and a value assigned to a coordinate, propagates this assignment to the row and column of coordinate.}
\begin{algorithmic}
\Function{propagateRowsAndColumns}{$ourMatrix$, $coordinate$, $value$}
\For{each $entry$ in the same row or column as $coordinate$}
\If{not \Call{eliminate}{$ourMatrix$, $entry$, $value$}}
\State \Return False
\EndIf
\If{not \Call{eliminate}{$ourMatrix$, $entry$, $-value$}}
\State \Return False
\EndIf
\EndFor
\State \Return True
\EndFunction
\end{algorithmic}
\end{algorithm}

Implementation of the \textsc{propagateSquares} function involves many helper functions, the details of which we omit for brevity. We provide only very high level pseudo-code for the function \textsc{propagateSquare}, which propagates constraints in a single square. The function \textsc{propagateSquares} then calls \textsc{propagateSquare} for each square containing our coordinate.

The function \textsc{propagateSquare} takes our matrix with a $coordinate$, $value$, and a $square$ containing $coordinate$. The function propagates the consequences of this assignment to the given square. It returns $False$ if it encounters a contradiction at some point, otherwise it returns $True$.

\begin{algorithm}[H]
\caption{Given a matrix, a value assigned to a coordinate, and a $2\times 2$ submatrix including this coordinate, propagates this assignment to the given square.}
\begin{algorithmic}
\Function{propagateSquare}{$ourMatrix$, $coordinate$, $value$, $square$}
\State{label the entries of $square$ as $opposite$, $sameRow$, and $sameColumn$,}
\State{based on their relative position to $coordinate$.}
\If{$opposite$ is $value$}
\State{$sameRow$ and $sameColumn$ must be the same color in opposite signs}
\State \Return
\EndIf
\algstore{myalg}
    \end{algorithmic}
    \end{algorithm}
    \begin{algorithm}[H]                 
    \begin{algorithmic}           
    \algrestore{myalg}
\If{$opposite$ is $-value$}
\State{$sameRow$ and $sameColumn$ must have the same value}
\State \Return
\EndIf
\If{$opposite$ is the same color as $value$}
\State{$sameRow$ and $sameColumn$ must have the same color}
\EndIf
\If{$opposite$ is a different color from $value$}
\State{$sameRow$ and $sameColumn$ must have different colors}
\EndIf
\If{$sameRow$ and $sameColumn$ are the same value}
\State{\Return\Call{assign}{$ourMatrix$, $opposite$, $-value$}}
\EndIf
\If{$sameRow$ and $sameColumn$ are the same color in opposite signs}
\State{\Return\Call{assign}{$ourMatrix$, $opposite$, $value$}}
\EndIf
\If{$sameRow$ and $sameColumn$ are the same color}
\State{$opposite$ must be $\pm value$}
\State \Return
\EndIf
\If{$sameRow$ and $sameColumn$ are different colors}
\State{\Return (\Call{eliminate}{$ourMatrix$, $opposite$, $value$} and \\\hspace{2cm} \Call{eliminate}{$ourMatrix$, $opposite$, $-value$})}
\EndIf
\If{$value$ is not a possibility for $opposite$}
\State{$sameRow$ and $sameColumn$ can't have the same color in opposite}
\State{signs}
\EndIf
\If{$-value$ is not a possibility for $opposite$}
\State{$sameRow$ and $sameColumn$ can't have the same value}
\EndIf
\State \Return True
\EndFunction
\end{algorithmic}
\end{algorithm}

All of the above functions rely on knowing the value of one entry, and determining how this affects the values of related entries. However, in the example in \ref{example}, we saw that we were able to make similar eliminations when only the color (but not the sign) of an entry is known. For propagating to rows and columns, we can still use the \textsc{propagateRowsAndColumns} function, since the sign of the entry has no influence on this function. However, for the squares, we need a separate function, which we now define.

The function \textsc{propagateSquaresColor} takes our matrix with a $coordinate$ and $color$, and it propagates the consequences of knowing that the value of $coordinate$ is $\pm color$ to all $2\times 2$ submatrices containing $coordinate$. It returns $False$ if it encounters a contradiction at some point, otherwise it returns $True$.

\begin{algorithm}[H]
\caption{Given a matrix and the color of a coordinate, propagates this constraint to all $2\times 2$ submatrices which include this coordinate.}
\begin{algorithmic}
\Function{propagateSquaresColor}{$ourMatrix$, $coordinate$, $color$}
\For{each $2\times 2$ submatrix of $ourMatrix$ including $coordinate$}
\State{label the entries of the square as $opposite$, $sameRow$, and $sameColumn$,}
\State{based on their relative position to $coordinate$.}
\If{$opposite$ is known to be the same color as $color$}
\State{$sameRow$ and $sameColumn$ must be the same color}
\State \Return
\EndIf
\If{$opposite$ is known to be a different color from $color$}
\State{$sameRow$ and $sameColumn$ must have different colors}
\State \Return
\EndIf
\If{$sameRow$ and $sameColumn$ are the same color}
\State{$opposite$ must have the same color as $coordinate$}
\EndIf
\If{$sameRow$ and $sameColumn$ can't be the same color}
\State{$opposite$ must have a different color from $coordinate$}
\EndIf
\EndFor
\EndFunction
\end{algorithmic}
\end{algorithm}

We now give psuedo-code for our overall propagation function, \textsc{propagate}, which calls the above functions as long as there are values and colors to propagate. 

\begin{algorithm}[H]
\caption{Given a matrix, performs all propagations until there are no more values or colors to propagate. If there is a contradiction at some point, it immediately returns $False$. If it doesn't produce a contradiction, return $True$.}
\begin{algorithmic}
\Function{propagate}{$matrix$}
\While{there are value pairs $(entry, value)$ to be propagated}
\If{not \Call{propagateRowsAndColumns}{$matrix$, $entry$, $value$}}
\State\Return{$False$}
\EndIf
\If{not \Call{propagateSquares}{$matrix$, $entry$, $value$}}
\State\Return{$False$}
\EndIf
\While{there are color pairs $(entry, color)$ to be propagated}
\If{not \Call{propagateRowsAndColumns}{$matrix$, $entry$, $color$}}
\State\Return{$False$}
\EndIf
\If{not \Call{propagateSquaresColor}{$matrix$, $entry$, $color$}}
\State\Return{$False$}
\EndIf
\EndWhile
\EndWhile
\Return{$True$}
\EndFunction
\end{algorithmic}
\end{algorithm}

Before we describe the overall algorithm for finding consistently signed intercalate matrices, we describe two verification functions.

The first verification function, \textsc{verify}, checks there is no contradiction in the values assigned so far in the matrix, and ignores the unknown values. It returns $True$ if there is no contradiction so far, and $False$ if there is a contradiction.

\begin{algorithm}[H]
\caption{Given a matrix, verifies that there is no contradiction among assigned values}
\begin{algorithmic}
\Function{verify}{$ourMatrix$}
\For{each row of $ourMatrix$}
\If{there are two known entries with the same color}
\State \Return{$False$}
\EndIf
\EndFor
\For{each column of $ourMatrix$}
\If{there are two known entries with the same color}
\State \Return{$False$}
\EndIf
\EndFor
\For{each $2\times 2$ submatrix of $ourMatrix$}
\If{all entries are known}
\State{identify the relative locations of the entries as $coordinate$, $opposite$,}
\State{$sameRow$, and $sameColumn$.}
\If{not (($coordinate = opposite$ and $sameRow = -sameColumn$) or \\($coordinate = -opposite$ and $sameRow = sameColumn$) or \\($coordinate \neq opposite$ and $coordinate \neq -opposite$ and \\ $sameRow \neq sameColumn$ and $sameRow \neq -sameColumn$))}
\State\Return{$False$}
\EndIf
\EndIf
\EndFor
\State\Return{$True$}
\EndFunction
\end{algorithmic}
\end{algorithm}

The second verification function, \textsc{verifyComplete}, checks if every entry of the matrix has been assigned a value. It returns $True$ if the matrix is complete, and it returns $False$ otherwise.

We now provide the pseudocode for the main algorithm, which takes a partially completed matrix of type $[r,s,n]$ as input, and either returns a consistently signed intercalate matrix of the given type, or returns $False$ if such a matrix does not exist. This function, $completeMatrix$, is called recursively in order to find the desired matrix.

Prior to calling the function $completeMatrix$, a user can choose to make some assignments. Choosing these assignments wisely can significantly reduce the run time of the algorithm, while maintaining correctness. Choice of these initial assignments is discussed in \ref{input}.

The algorithm works by choosing a value to test at a particular input, and propagating the consequences of that assignment. If the propagation does not lead to a contradiction and does not complete the matrix, this process is repeated. If the propagation leads to a completed consistently signed intercalate matrix, that matrix is returned. If the propagation leads to a contradiction, we backtrack and eliminate that value as a possibility from the entry where it was tested. A new test value is then chosen, and the process repeats.

In some situations, when we backtrack and eliminate a test value, we can make other eliminations as well. The choice of the test value and these eliminations are discussed in \ref{testvalues}.

\begin{algorithm}[H]
\caption{Given a matrix, returns a consistently signed intercalate matrix with the missing values filled in, or returns false if no such matrix exists.}
\begin{algorithmic}
\Function{completeMatrix}{$matrix$}
\If{not \Call{verify}{$matrix$}}
\State\Return{False}
\EndIf
\If{\Call{verifyComplete}{$matrix$}}
\State\Return{$matrix$}
\EndIf
\If{there is a choice for $testValue$ and $coordinate$}
\State{$copyMatrix\gets matrix$}
\State\Call{assign}{$copyMatrix$, $coordinate$, $testValue$}
\If{not \Call{propagate}{$copyMatrix$}}
\If{not \Call{eliminate}{$matrix$, $coordinate$, $testValue$}}
\State\Return{$False$}
\EndIf
\State\Return\Call{completeMatrix}{$matrix$}
\EndIf
\If{not \Call{completeMatrix}{$copyMatrix$}}
\State{\Call{eliminate}{$matrix$, $coordinate$, $testValue$}}
\State\Return\Call{completeMatrix}{$matrix$}
\Else
\State\Return\Call{completeMatrix}{$copyMatrix$}
\EndIf
\EndIf
\EndFunction
\end{algorithmic}
\end{algorithm}

An implementation of these functions in python is available at \url{www.math.umn.edu/~mklynn/sos_pub}, with a sample use of these functions.

In this implementation, we include two different versions of the complete matrix function, corresponding to two different methods for selecting test values. These methods are discussed in \ref{testvalues}.

\section{Group Action and Choice of Input} \label{input}

If we have a consistently signed intercalate matrix $M$ of type $(r,s,n)$, we can produce many more such matrices by manipulating our matrix. In particular, we can obtain another consistently signed intercalate matrix with the following operations:
\begin{itemize}
\item permuting the rows of $M$,
\item permuting the columns of $M$,
\item permuting the colors of $M$,
\item flipping the signs of all elements in a row (or multiple rows) of $M$,
\item flipping the signs of all elements in a column (or multiple columns) of $M$,
\item flipping the signs of all elements of a color (or multiple colors) of $M$.
\end{itemize}
We can also view these operations as an action of $O(r,\mathbb{Z})\times O(s,\mathbb{Z})\times O(n,\mathbb{Z})$ on the set of sums-of-squares formulas of type $(r,s,n)$ over $\mathbb{Z}$, as in \cite{Lynn}.

When searching for formulas using our algorithm, we can use this fact to significantly reduce the number of potential matrices that we test. In particular, we can choose our input based on these observations.

\begin{prop}
There exists a consistently signed intercalate matrix of type $(r,s,n)$ if and only if there is a consistently signed intercalate matrix of type $(r,s,n)$ such that
\begin{itemize}
\item The first row has entries $1,2,3,....,s$.
\item The $(i,i)$th entry is $1$ for $i\leq \left\lceil \frac{rs}{n}\right\rceil$.
\end{itemize}
\end{prop}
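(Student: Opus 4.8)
The plan is to start from an arbitrary consistently signed intercalate matrix $M$ of type $(r,s,n)$ and move it into the stated normal form by applying the six operations listed above, each of which, by the preceding discussion, again yields a consistently signed intercalate matrix of the same type. I would carry out the normalization in two stages: first arrange a single color along the diagonal, then fix up the first row, choosing the operations in the second stage so that they do not disturb the work done in the first. The converse direction is immediate, since a matrix in the stated form is in particular a consistently signed intercalate matrix of type $(r,s,n)$.

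For the diagonal condition, set $t=\lceil rs/n\rceil$. Since the $rs$ entries of $M$ use only $n$ colors, the pigeonhole principle guarantees that some color occurs at least $t$ times; after a permutation of colors I may assume this is color $1$. Because the underlying intercalate matrix has distinct entries along each row and each column, the occurrences of color $1$ form a partial permutation---no two share a row or a column---so by a permutation of the rows together with an independent permutation of the columns I can move these occurrences onto the main diagonal, placing color $1$ at $(i,i)$ for every $i$ up to the number of occurrences of color $1$. Here I would record the feasibility bound $t\le\min(r,s)$: a single row already exhibits $s$ distinct colors and a single column exhibits $r$, forcing $n\ge\max(r,s)$ and hence $rs/n\le\min(r,s)$, so there is room on the diagonal. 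Finally, for each $i\le t$ with $(i,i)=-1$ I flip the signs of row $i$; as distinct diagonal cells lie in distinct rows, these flips act independently and leave every $(i,i)$ equal to $+1$.

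For the first-row condition I would use only operations that fix color $1$ and therefore preserve the diagonal already arranged. Since $t\ge 1$, color $1$ sits at $(1,1)$, so the first row consists of $s$ distinct colors with color $1$ in its first position. Permuting the colors while fixing color $1$ lets me relabel the remaining first-row colors so that the entry in column $j$ has color $j$; this moves no cell and changes no sign, so the diagonal is untouched. Then flipping the sign of each color $j\ge 2$ for which $(1,j)=-j$ turns the first row into exactly $1,2,\dots,s$; as these sign flips involve only colors other than $1$, the diagonal entries remain $+1$.

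The one genuine obstacle is the tension between the two conditions: fixing the first row naively would also want to permute columns and flip column signs, which would undo the diagonal placement and its signs. The resolution, and the point I would emphasize, is that the entire first-row normalization can be accomplished through color permutations and color sign flips alone---operations that relabel and re-sign colors without moving any cell---so that once color $1$ is pinned to the diagonal it is never disturbed. With the two stages performed in this order, the resulting matrix satisfies both bullet points, giving the desired normal form.
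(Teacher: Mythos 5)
Your proposal is correct and follows essentially the same route as the paper's proof: pigeonhole to get a color occurring $\left\lceil rs/n\right\rceil$ times, row and column permutations to place its occurrences (which lie in distinct rows and columns) on the diagonal, row sign flips to make the diagonal entries $+1$, and then color permutations and color sign flips fixing color $1$ to normalize the first row without disturbing the diagonal. Your explicit verification that $\left\lceil rs/n\right\rceil\leq\min(r,s)$, so the diagonal has room, is a small point the paper leaves implicit.
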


\begin{proof}
Suppose we have a consistently signed intercalate matrix $M$ of type $(r,s,n)$.

For an $r\times s$ matrix with entries chosen from $n$ colors, by the pigeonhole principle, some color, $x$, must occur at least $N = i\leq \left\lceil \frac{rs}{n}\right\rceil$ times. Let $M'$ be the matrix obtained from $M$ by swapping the colors $1$ and $x$, so that the color $1$ occurs $N$ times.

Suppose $\pm 1$ occurs in the entry $(x_1,y_1)$. If we swap the first and $x_1$th rows and swap the first and $y_1$th columns, that $\pm 1$ is now in the entry $(1,1)$. Assuming $N>1$, there is another $\pm 1$ in the matrix. Since the colors along rows and columns must be distinct, it occurs at an entry $(x_2,y_2)$, where $x_2 \neq 1$ and $y_2 \neq 1$. If we swap the second and $x_2$th rows and swap the second and $y_2$th columns, then that $\pm 1$ is now in the entry $(2,2)$. Continuing this, we get $\pm 1$ in the entries $(i,i)$ for $i\leq N$.

Now, for $i$ such that the entry $(i,i)$ is $-1$, we flip the signs in the $i$th row. This ensures that the entries $(i,i)$ are all $1$ for $i\leq N$.

The first row now has entries $1,z_2,...$. Since colors along rows must be distinct, $z_2\neq \pm 1$. We then swap the color of $z_2$ and $2$, so the second entry is now $\pm 2$. If it is $-2$, we flip the signs of all $2$'s in the matrix. The first row then has entries $1,2,z_3,...$, where the color of $z_3$ is not $1$ or $2$. We swap this color with $3$, and fix the sign if needed. Continuing this for the rest of the entries in the row, we obtain a matrix with entries $1,2,3,...,s$ along the first row.

Note that the operations on the first row left the $1$'s along the diagonal intact, so we have a matrix of the desired form.
\end{proof}

With a matrix of this form, we also know that the first column must have $1,-2,-3,...,-N$ for the first $N$ entries, due to the conditions on squares. However, the propagation algorithm immediately fills in these values, so we don't usually bother including them with our input.

More careful analysis of the form of consistently signed intercalate matrices of particular types can yield additional assumptions about our input matrix.

\section{Choice of Test Values} \label{testvalues}

The algorithm for producing a consistently signed intercalate matrix necessitates making a choice of a test coordinate and value, perhaps many times. Different selections for test values can have a dramatic effect on the run time of the algorithm, so we want to make these choices carefully.

In the implementation of the algorithm available at \url{www.math.umn.edu/~mklynn/sos_pub}, we provide two different versions of the function \textsc{completeMatrix}, which make different choices for these test values.

In the first version, we choose the entry with the fewest possibilities, whose value is not yet known. We select the first of the possible values, and test that value. The idea behind this choice is that with fewer possibilities at that entry, we'll have fewer test values to cycle through, so this should help make the algorithm faster, compared to testing a coordinate with more possible values.

In the second version, we choose our test value based on combinatorial arguments about the frequency of each color in the matrix. This often involves using knowledge about smaller types of consistently signed intercalate matrices. As an example, suppose we would like to know if there exists a sums-of-squares formula of type $[10,17, 28]$. In fact, whether or not a formula of this type exists remains an open question. In refining our search for this formula, we can make use of the following observations.

\begin{prop}
If there is a consistently signed matrix of type $(10,17,28)$, then:
\begin{enumerate}
\item Some color must occur at least $7$ times.
\item Each color must occur at least $3$ times.
\end{enumerate}
\end{prop}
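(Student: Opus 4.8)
For part (1), the plan is a direct pigeonhole count. The matrix has $r\times s = 10\times 17 = 170$ entries, each carrying one of the $n = 28$ colors, and the sign is irrelevant to counting how often a color appears. If every color occurred at most $6$ times, we could account for at most $28\times 6 = 168 < 170$ entries, a contradiction; hence some color occurs at least $\lceil 170/28\rceil = 7$ times. I expect this step to be routine.

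Part (2) is the substantive claim. I would fix an arbitrary color $c$, let $m$ be its multiplicity, and aim to show $m\ge 3$. First I would normalize using the group action described above: permuting rows and columns moves the $m$ occurrences of $c$ onto the diagonal positions $(1,1),\dots,(m,m)$ (possible since the occurrences of a single color lie in distinct rows and distinct columns), and flipping signs along rows makes each diagonal entry $+c$. Applying the square condition to a pair of diagonal entries $(i,i),(j,j)$ forces $M_{ij}$ and $M_{ji}$ to share a color, and the consistent-signing condition then forces $M_{ij}=-M_{ji}$; so the principal $m\times m$ block is $+c$ on the diagonal and sign-skew off it, with off-diagonal colors distinct from $c$. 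This is the structural normal form I would work from.

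To finish, I would rule out $m=1$ and $m=2$. The mechanism I would use is to read off the colors forced along the ``arms'' of the configuration — the remaining entries of the rows and columns meeting the diagonal occurrences — and compare the number of colors this forces against the budget of $28$. Concretely, for $m=1$ the square condition already forces the other $16$ colors of row $1$ to be disjoint from the other $9$ colors of column $1$, and I would push this bookkeeping (together with the analogous statement for $m=2$ and the skew block) until the constraints become incompatible with the existence of the full matrix. Since this compatibility is governed by smaller types, I expect to invoke the (non)existence or structure of the consistently signed intercalate matrix of the reduced type obtained by deleting the $m$ affected rows and columns — exactly the ``knowledge about smaller types'' referred to above — and the threshold $3$ should emerge from specializing the resulting inequality to $(r,s,n)=(10,17,28)$.

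The hard part will be precisely the exclusion of $m=2$, and the calibration that makes the bound $3$ rather than $2$. Neither $m=1$ nor $m=2$ is eliminated by any single $2\times 2$ square constraint: local reasoning is entirely consistent with a color appearing only once or twice, as the normal-form analysis above already suggests. Thus the proof cannot be purely local; it must import a genuinely global ingredient, either a counting inequality on the $28$ colors that is sharp for these parameters, or the nonexistence of the smaller-type matrix produced by the deletion. Identifying the right such ingredient, and checking that it is tight enough to force exactly $m\ge 3$, is where the real work lies.
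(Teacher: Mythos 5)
Your part (1) is exactly the paper's pigeonhole argument and is fine. Part (2), however, has a genuine gap: you correctly diagnose that local $2\times 2$ reasoning cannot exclude $m=1$ or $m=2$ and that a global ingredient about smaller types is needed, but the specific reduction you sketch does not work, and you never pin down the ingredient. You propose deleting the $m$ affected \emph{rows and columns}, which for $m=2$ yields a consistently signed intercalate matrix of type $(8,15,27)$ --- but such matrices exist, since $8*_{\mathbb{Z}}15=16\leq 27$, so no contradiction follows. Deleting only the rows fares no better: that gives type $(8,17,27)$, which also exists since $8*_{\mathbb{Z}}17=24$. The asymmetry is essential. The paper's proof deletes only the at most two \emph{columns} containing the rare color $c$, keeping all ten rows; the result is a $10\times 15$ consistently signed intercalate matrix using at most $27$ colors, i.e.\ of type $(10,15,27)$, and the needed global fact is precisely that no sums-of-squares formula of type $[10,15,27]$ exists (equivalently $10*_{\mathbb{Z}}15=28>27$). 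That single nonexistence fact is the entire content of part (2); your threshold of $3$ comes from the arithmetic that deleting $k$ columns leaves type $(10,17-k,27)$, and $k\leq 2$ is exactly the range where $10*_{\mathbb{Z}}(17-k)\geq 28$ forces a contradiction.

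A secondary point: your normalization machinery (moving the occurrences of $c$ to the diagonal, making them $+c$, deriving the sign-skew principal block) is correct but unnecessary, and the bookkeeping you propose along the ``arms'' leads nowhere --- as you yourself compute, $m=1$ forces only $26$ distinct colors against a budget of $28$. The column-deletion argument needs no normalization at all: any submatrix obtained by deleting columns of a consistently signed intercalate matrix is again one, which is the only structural observation required.
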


\begin{proof}
\begin{enumerate}
\item We have $10\cdot 17 = 170$ entries, and $28$ choices for colors. This means that each color will occur an average of $\left\lfloor \frac{170}{28}\right\rfloor\approx 6.07$. So, some color will need to occur at least $7$ times.
\item For this result, we use the fact that there is no sums-of-squares formula of type $[10,15,27]$.

Suppose some color $c$ occurs $2$ or fewer times in a consistently signed intercalate matrix of type $[10,17,28]$. Now, imagine we delete the $2$ (or fewer) columns where this color occurs. This results in a $10\times 15$ matrix, which has no entries with color $c$. So, the number of colors in this matrix is at most $27$, and the matrix is a consistently signed intercalate matrix of type $[10,15,27]$. But such a matrix cannot exist, so we have a contradiction, and every color must occur at least $3$ times.
\end{enumerate}
\end{proof}

By permuting the colors, we may assume that the colors are in decreasing order of frequency. That is, $1$ is the most prevalent color, and $n$ is the least prevalent color. We can use information we deduce about the frequency of each color to form a minimum signature.

\begin{defn}
A \emph{minimum signature} for a consistently signed intercalate matrix of type $(r,s,n)$ is a function $m:\{1,2,...n\}\rightarrow\mathbb{Z}$ such that the number of times the color $x$ appears in the matrix is at least $m(x)$.
\end{defn}

So, a minimum signature provides a lower bound for the frequency of each color. In the second version of \textsc{completeMatrix}, we first focus on meeting these minimum frequencies. Given a minimum signature $m$, the function finds the first color $x$ such that fewer than $m(x)$ entries have been assigned color $x$. The function finds the first entry where $x$ is still a possibility, and that is chosen as the test value and coordinate. If there is no entry where $x$ is still a possibility, there is no consistently signed intercalate matrix with the desired form and minimum signature, and so the function will return $False$. Once there is no color $x$ with frequency less than $m(x)$, we revert to the method for choosing test values from the original \textsc{completeMatrix} function: choosing the first value in the entry with the fewest possibilities.

Comparisons of the run times for these versions is included in \ref{runtimes}.

\section{Run Times}	\label{runtimes}

\begin{wrapfigure}{r}{0.4\textwidth}{\footnotesize
$\begin{array}{|c|c|c|c|}
\hline
\textrm{Type} & \textrm{v. 1} & \textrm{v. 2}& \textrm{brute}\\
\hline
(2,3,3) &<.01s&<.01s&.19s\\
(2,5,5) &<.01s&<.01s&-\\
(2,7,7) &.04s&.04s&-\\
(2,9,9) &.79s&.80s&-\\
(3,3,3) &<.01s&<.01s&32s\\
(3,5,6) &.01s&.01s&-\\
(3,6,7) &.05s&.09s&-\\
(3,7,7) &.07s&.07s&-\\
(3,9,10) &45s&61s&-\\
(4,5,7) &.57s&.91s&-\\
(4,6,7) &.08s&.18s&-\\
(4,7,7) &.10s&.11s&-\\
(4,9,11) &425s&650s^*&-\\
(5,5,7) &.43s&.26s&-\\
(5,6,7) &.11s&.29s&-\\
(5,7,7) &.15s&.14s&-\\
(6,6,7) &.16s&.42s&-\\
(6,7,7) &.20s&.20s&-\\
(7,7,7) &.26s&.26s&-\\
\hline
\end{array}$}
\end{wrapfigure}

In this section, we compare run times of our two versions of the function \textsc{completeMatrix} with a brute force search. The results are recorded in the following table. Each time is the average over 10 runs of the program, on MacBook Pro with 3.1 GHz Intel Core i7 processor and 16 GB 1867 MHz DDR3, except for times marked with an asterisk, which were run only once. 

For these tests, our initial assignments were to fill in the first row as $1,2,3,...,r$ and to fill in $\left\lceil\frac{rs}{n}\right\rceil$ $1$'s along the diagonal, as discussed in \ref{input}. For version 2, we require that every color occur at least once; note that this is suboptimal, as discussed in \ref{testvalues}. We run these tests for several types $(r,s,n)$ where $r\leq s\leq 9$.

On the right, we have a table of run times for types $(r,s,n)$, where $n$ is the largest integer such that a formula of type $(r,s,n)$ does not exist. Below, we give a table of run times for types $(r,s,n)$, where $n$ is the smallest value such that a formula of type $(r,s,n)$ exists.

{\footnotesize
\begin{minipage}{0.5\textwidth}
\[
\begin{array}{|c|c|c|c|}
\hline
\textrm{Type} & \textrm{v. 1} & \textrm{v. 2}& \textrm{brute}\\
\hline
(2,3,4) &<.01s&<.01s&1.50s\\
(2,4,4) &<.01s&<.01s&60s\\
(2,5,6) &<.01s&<.01s&-\\
(2,6,6) &<.01s&<.01s&-\\
(2,7,8) &<.01s&<.01s&-\\
(2,8,8) &<.01s&<.01s&-\\
(2,9,10) &<.01s&<.01s&-\\
(3,3,4) &<.01s&<.01s&632s\\
(3,4,4) &<.01s&<.01s&-\\
(3,5,7) &.02s&.02s&-\\
(3,6,8) &<.01s&<.01s&-\\
(3,7,8) &<.01s&.01s&-\\
(3,8,8) &.01s&.01s&-\\
(3,9,11) &15s&6.99s&-\\
(4,4,4) &<.01s&<.01s&-\\
(4,5,8) &.01s&.01s&-\\
(4,6,8) &.02s&.01s&-\\
\hline
\end{array}
\]
\end{minipage}
\begin{minipage}{0.5\textwidth}
\[
\begin{array}{|c|c|c|c|}
\hline
\textrm{Type} & \textrm{v. 1} & \textrm{v. 2}& \textrm{brute}\\
\hline
(4,7,8) &.02s&.02s&-\\
(4,8,8) &.02s&.02s&-\\
(4,9,12) &.04s&.04s&-\\
(5,5,8) &.02s&.02s&-\\
(5,6,8) &.02s&.02s&-\\
(5,7,8) &.02s&.03s&-\\
(5,8,8) &.03s&.03s&-\\
(6,6,8) &.02s&.03s&-\\
(6,7,8) &.03s&.03s&-\\
(6,8,8) &.03s&.03s&-\\
(7,7,8) &.04s&.04s&-\\
(7,8,8) &.04s&.04s&-\\
(7,9,15) &-&.31s&-\\
(8,8,8) &.05s&.05s&-\\
(8,9,16) &.23s&.20s&-\\
(9,9,16) &.21s&.23s&-\\
\hline
\end{array}
\]
\end{minipage}
}

For many of the test cases that we run, the run times for the two versions of the algorithm are very close. Recall that, in the second version of the algorithm, once the minimum frequencies are met, it reverts to choosing test values based on the heuristic of the first version. For small types, the minimum frequencies are often achieved with the initial assignments, so the two versions are functionally identical. For larger types such as $(4,9,11)$ and $(3,9,11)$, we observe clear differences in the run times. However, neither version consistently produces faster running times. Future improvements to the algorithm could be made by exploring different ways to choose test values, and how they affect the run times.

In an appendix, we include a table of known values of $r*_{\mathbb{Z}}s$, the smallest value of $n$ such that a sums-of-squares formula of type $[r,s,n]$ exists over the integers. In many cases, the exact value of $r*_{\mathbb{Z}}s$ is not known, and we instead include an upper bound (indicated with an asterisk). These tables are included as guides for which types $(r,s,n)$ would be interesting to explore using our algorithms.


\appendix

\section{Values of $r*_\mathbb{Z}s$}

We include tables of known values of $r*_\mathbb{Z}s$, the smallest number $n$ such that there is a sums-of-squares formula of type $[r,s,n]$ over the integers. In the cases where only an upper bound is known, we include that upper bound with an asterisk. These tables are compiled from those in \cite{ShapiroBook}.

\vspace{.5cm}

\begin{center}
$r*_\mathbb{Z}s$ for $1\leq r,s\leq 17$
\end{center}

{\small
\begin{tabular}{ c | ccccccccccccccccc }
$r*_\mathbb{Z}s$ & 1&2&3&4&5&6&7&8&9&10&11&12&13&14&15&16&17\\
\hline
1&1&2&3&4&5&6&7&8&9&10&11&12&13&14&15&16&17\\
2&&2&4&4&6&6&8&8&10&10&12&12&14&14&16&16&18\\
3&&&4&4&7&8&8&8&11&12&12&12&15&16&16&16&19\\
4&&&&4&8&8&8&8&12&12&12&12&16&16&16&16&20\\
5&&&&&8&8&8&8&13&14&15&16&16&16&16&16&21\\
6&&&&&&8&8&8&14&14&16&16&16&16&16&16&22\\
7&&&&&&&8&8&15&16&16&16&16&16&16&16&23\\
8&&&&&&&&8&16&16&16&16&16&16&16&16&24\\
9&&&&&&&&&16&16&16&16&16&16&16&16&25\\
10&&&&&&&&&&16&26&26&27&27&28&28&$29^*$\\
11&&&&&&&&&&&26&26&28&28&30&30&$32^*$\\
12&&&&&&&&&&&&26&28&30&32&32&32\\
13&&&&&&&&&&&&&28&32&32&32&32\\
14&&&&&&&&&&&&&&32&32&32&32\\
15&&&&&&&&&&&&&&&32&32&32\\
16&&&&&&&&&&&&&&&&32&32\\
17&&&&&&&&&&&&&&&&&32
\end{tabular}
}

\vspace{.5cm}

\begin{center}
$r*_\mathbb{Z}s$ for $1\leq r \leq 30$ and $18\leq s\leq 30$
\end{center}

{\small
\begin{tabular}{ c |  ccccccccccccc}
$r*_\mathbb{Z}s$ &18&19&20&21&22&23&24&25&26&27&28&29&30\\
\hline
1&18&19&20&21&22&23&24&25&26&27&28&29&30\\
2&18&20&20&22&22&24&24&26&26&28&28&30&30\\
3&20&20&20&23&24&24&24&27&28&28&28&31&32\\
4&20&20&20&24&24&24&24&28&28&28&28&32&32\\
5&22&23&24&24&24&24&24&29&30&31&32&32&32\\
6&22&24&24&24&24&24&24&30&30&32&32&32&32\\
7&24&24&24&24&24&24&24&31&32&32&32&32&32\\
8&24&24&24&24&24&24&24&32&32&32&32&32&32\\
9&26&27&28&29&30&31&32&32&32&32&32&32&32\\
10&$29^*$&$30^*$&$30^*$&30&30&32&32&32&32&32&32&32&32\\
11&$32^*$&$32^*$&$32^*$&$42^*$&$44^*$&$44^*$&$44^*$&$46^*$&$48^*$&$48^*$&$48^*$&$48^*$&$52^*$\\
12&32&32&32&$42^*$&$44^*$&$44^*$&$44^*$&$48^*$&$48^*$&$48^*$&$48^*$&$48^*$&$52^*$\\
13&32&$43^*$&$44^*$&$44^*$&$44^*$&$48^*$&$48^*$&$48^*$&$48^*$&$48^*$&$58^*$&$58^*$&$58^*$\\
14&32&$43^*$&$44^*$&$46^*$&$48^*$&$48^*$&$48^*$&$48^*$&$48^*$&$48^*$&$58^*$&$58^*$&$58^*$\\
15&32&$44^*$&$46^*$&$48^*$&$48^*$&$48^*$&$48^*$&$48^*$&$48^*$&$48^*$&$60^*$&$62^*$&$63^*$\\
16&32&$44^*$&$46^*$&$48^*$&$48^*$&$48^*$&$48^*$&$48^*$&$48^*$&$48^*$&$60^*$&$62^*$&$64^*$\\
17&32&$49^*$&$50^*$&$51^*$&$52^*$&$53^*$&$54^*$&$55^*$&$56^*$&$57^*$&$61^*$&$64^*$&$64^*$\\
18&$50^*$&$50^*$&$52^*$&$52^*$&$54^*$&$54^*$&$56^*$&$56^*$&$57^*$&$57^*$&$64^*$&$64^*$&$64^*$\\
19&&$56^*$&$56^*$&$59^*$&$60^*$&$60^*$&$64^*$&$64^*$&$64^*$&$64^*$&$64^*$&$64^*$&$64^*$\\
20&&&$56^*$&$60^*$&$60^*$&$60^*$&$64^*$&$64^*$&$64^*$&$64^*$&$64^*$&$64^*$&$64^*$\\
21&&&&$64^*$&$64^*$&$64^*$&$64^*$&$72^*$&$76^*$&$77^*$&$80^*$&$80^*$&$84^*$\\
22&&&&&$68^*$&$72^*$&$72^*$&$72^*$&$78^*$&$80^*$&$80^*$&$80^*$&$84^*$\\
23&&&&&&$72^*$&$72^*$&$72^*$&$78^*$&$80^*$&$84^*$&$88^*$&$90^*$\\
24&&&&&&&$72^*$&$72^*$&$80^*$&$80^*$&$88^*$&$88^*$&$90^*$\\
25&&&&&&&&$72^*$&$80^*$&$80^*$&$88^*$&$94^*$&$95^*$\\
26&&&&&&&&&$80^*$&$80^*$&$89^*$&$94^*$&$96^*$\\
27&&&&&&&&&&$89^*$&$89^*$&$96^*$&$96^*$\\
28&&&&&&&&&&&$96^*$&$96^*$&$96^*$\\
29&&&&&&&&&&&&$96^*$&$96^*$\\
30&&&&&&&&&&&&&$96^*$\\
\end{tabular}
}

\end{document}